\documentclass[submission,copyright,creativecommons]{eptcs}

\usepackage{iftex}

\ifpdf
  \usepackage{underscore}         %
  \usepackage[T1]{fontenc}        %
\else
  \usepackage{breakurl}           %
\fi

\usepackage{amssymb}
\usepackage{amsmath}
\usepackage[onelanguage,linesnumbered,vlined,algoruled,noresetcount,procnumbered]{algorithm2e}
\SetVlineSkip{0.0ex}\DontPrintSemicolon
\SetProcNameSty{texttt}
\SetKw{Halt}{halt}
\SetKw{Break}{break}
\SetKw{KwNot}{not}
\SetKw{KwAnd}{and}
\SetKw{KwOr}{or}
\SetKw{KwNil}{nil}
\SetKw{KwFrom}{from}
\SetKwFunction{readSymbol}{read}
\SetKwFunction{accept}{Accept}
\SetKwFunction{reject}{Reject}
\SetKwData{beforeMarking}{beforeMarking}
\SetKwData{afterMarking}{afterMarking}
\SetKwData{simulation}{backwardSimulation}

\SetKwData{markedSymbol}{\ensuremath{\sigma}}
\SetKwData{markingState}{\ensuremath{s}}
\SetKwData{simulatedState}{\ensuremath{p}}
\SetKwData{simulatedSymbol}{\ensuremath{X}}
\SetKwData{simulatedDirection}{\ensuremath{d}}
\SetKwData{mode}{mode}
\SetKwData{scannedSymbol}{scannedSymbol}

\SetKwData{true}{true}
\SetKwData{false}{false}
\hbadness 8000
\usepackage[capitalize,noabbrev,sort&compress,nameinlink]{cleveref}
\crefformat{footnote}{#2\footnotemark[#1]#3}
\crefname{algorithm}{Procedure}{Procedures}
\usepackage{ifthen}
\usepackage{xspace}
\usepackage{xparse,xifthen,xspace,xstring}

\usepackage{pifont,xcolor}
\definecolor{pinegreen}{cmyk}{0.92,0,0.59,0.25} %

\usepackage{accents}
\newcommand\marked[1]{{\ensuremath{\accentset\bullet{#1}}}}

\usepackage{xspace}

\newcommand{\set}[1]{\mbox{$\{#1\}$}}

\newcommand\poly{{\rm poly}}

\ProvideDocumentCommand\varconstr{ s O{} m O{} }{%
	\ensuremath{%
		\IfBooleanTF{#1}{%
			\IfInteger{#2}%
			{{{#4 #3}^{\multido{}{#2}{\prime}}}}%
			{{{#4 #3}\ifthenelse{\isempty{#2}}{'}{^{#2}}}}%
		}%
		{%
			{#4 #3}\ifthenelse{\isempty{#2}}{}{_{#2}}%
		}%
	}%
	\xspace%
}
\ProvideDocumentCommand\mA{ s O{A} O{} O{\cal} }{%
	\IfBooleanTF{#1}%
	{\varconstr*[#3]{#2}[#4]}%
	{\varconstr[#3]{#2}[#4]}%
}

\newcommand\s{s\xspace}
\newcommand\ow{\textsc1\xspace}
\newcommand\tw{\textsc2\xspace}

\newcommand\om{\textsc{om}\xspace}
\newcommand\am{\textsc{am}\xspace}

\newcommand\dfa{\textsc{dfa}\xspace}
\newcommand\nfa{\textsc{nfa}\xspace}
\newcommand\twdfa{{\tw}{\dfa}\xspace}
\newcommand\twnfa{{\tw}{\nfa}\xspace}
\newcommand\owdfa{{\ow}{\dfa}\xspace}
\newcommand\ownfa{{\ow}{\nfa}\xspace}

\newcommand\twdfas{{\twdfa}\s}
\newcommand\twnfas{{\twnfa}\s}
\newcommand\owdfas{{\owdfa}\s}
\newcommand\ownfas{{\ownfa}\s}
\newcommand\domlas{{\domla}\s}
\newcommand\damlas{{\damla}\s}
\newcommand\amlas{{\amla}\s}
\newcommand\omlas{{\omla}\s}

\newcommand*\la[1][1]{\ensuremath{#1}\textsc{\ifx&#1&\else-\fi la}\xspace}
\newcommand*\dla[1][1]{\textsc{d-}\la[#1]}%

\newcommand*\omla[1][1]{\om-\ensuremath{#1}\textsc{\ifx&#1&\else-\fi la}\xspace}
\newcommand*\domla[1][1]{\textsc{d-}\omla[#1]}%

\newcommand*\amla[1][1]{\am-\ensuremath{#1}\textsc{\ifx&#1&\else-\fi la}\xspace}
\newcommand*\damla[1][1]{\textsc{d-}\amla[#1]}%

\newcommand*\las[1][1]{{\la[#1]}\s}   %
\newcommand*\dlas[1][1]{{\dla[#1]}\s} %
\newcommand*\lend{{\ensuremath{\mathord{\vartriangleright}}}\xspace}
\newcommand*\rend{{\ensuremath{\mathord{\vartriangleleft}}}\xspace}

\newcommand\bigoof[1]{\ensuremath{O\left(#1\right)}}

\usepackage{amsthm}
\makeatletter
\newtheorem*{rep@theorem}{\rep@title}
\newcommand{\newreptheorem}[2]{%
\newenvironment{rep#1}[1]{%
 \def\rep@title{#2 \ref{##1}}%
 \begin{rep@theorem}}%
 {\end{rep@theorem}}}
\makeatother

\newtheorem{theorem}{Theorem}

\newtheorem{definition}{Definition}
\newtheorem{ex}{Example}

\newenvironment{example-cont}[1]{\bigskip\noindent\textbf{Example~\ref{#1}.~(cont.)\hspace{\labelsep}}}{\bigskip\noindent}

\newreptheorem{theorem}{Theorem}

\title{Once-Marking and Always-Marking $1$-Limited Automata}
\author{Giovanni Pighizzini
\institute{Dipartimento di Informatica\\
Universit\`{a} degli Studi di Milano, Italy}
\email{pighizzini@di.unimi.it}
\and
Luca Prigioniero
\institute{Department of Computer Science\\
Loughborough University, UK}
\email{l.prigioniero@lboro.ac.uk}
}

\begin{document}
\maketitle

\begin{abstract}
\noindent
Single-tape nondeterministic Turing machines that are allowed to replace the symbol in each tape
cell only when it is scanned for the first time
are also known as $1$-limited automata.
These devices characterize,
exactly as finite automata,
the class of regular languages.
However, they can be extremely more succinct.
Indeed,
in the worst case the size gap from $1$-limited automata to one-way deterministic finite automata is double exponential.

Here we introduce two restricted versions of $1$-limited automata, 
\emph{once-marking $1$-limited automata} and \emph{always-marking $1$-limited automata}, 
and study their descriptional complexity. 
We prove that once-marking $1$-limited automata still exhibit a double exponential size gap
to one-way deterministic finite automata. However, their deterministic
restriction is polynomially related in size to two-way deterministic finite automata, in contrast to deterministic
$1$-limited automata, whose equivalent two-way deterministic finite automata in the worst case
are exponentially larger.
For always-marking $1$-limited automata, we prove that the size gap to one-way deterministic finite automata
is only a single exponential. The gap remains exponential even in the case the given machine is deterministic.

We obtain other size relationships between different variants of these machines and finite automata and
we present some problems that deserve investigation.
\end{abstract}

\section{Introduction}
\label{sec:intro}

In 1967, with the aim of generalizing the concept of determinism for context-free languages,
Hibbard introduced \emph{limited automata}, a restricted version of Turing machines~\cite{Hi67}.
More precisely, for each fixed integer~$d\geq 0$, a \emph{$d$-limited automaton} is a single-tape nondeterministic Turing machine that
is allowed to replace the content of each tape cell only in the first~$d$ visits.

Hibbard proved that, for each~$d\geq 2$, $d$-limited automata characterize the class of context-free languages.
For~$d=0$ these devices cannot modify the input tape, hence they are two-way finite automata, so characterizing regular languages.
Furthermore, also $1$-limited automata are no more powerful than finite automata. The proof of this fact can be
found in~\cite[Thm.~12.1]{WW86}.

The investigation of these models has been reconsidered in the last decade, mainly from a descriptional point of view.
Starting with~\cite{PP14,PP15}, several works investigating properties of limited automata
and their relationships with other computational models appeared in the literature (for a recent survey see~\cite{Pig19}).

In this paper we focus on $1$-limited automata.
We already mentioned that these devices are no more powerful than finite automata, namely they recognize the class
of regular languages. However, they can be dramatically more succinct than finite automata. In fact, a double exponential
size gap from 1-limited automata to one-way deterministic finite automata has been proved~\cite{PP14}.
In other words, every $n$-state 1-limited automaton can be simulated by a one-way deterministic automaton with a
number of states which is double exponential in~$n$. Furthermore, in the worst case, this cost cannot be reduced.

As pointed out in~\cite{PP14}, this double exponential gap is related to a double role of the nondeterminism in 
1-limited automata. When the head of a 1-limited automaton reaches for the first time a tape cell, it replaces the symbol
in it according to a nondeterministic choice.
Furthermore, the set of nondeterministic choices allowed during the next visits to the same cell depends 
on the symbol written in the first visit and that cannot be further changed, 
namely it depends on the nondeterministic choice made during the first visit.

With the aim of better understanding this phenomenon, we started to investigate some restrictions of 1-limited automata.
On the one hand, we are interested in finding restrictions that reduce this double exponential gap to a single exponential.
We already know that this happens for \emph{deterministic} $1$-limited automata~\cite{PP14}.
So the problem is finding some restrictions that, still allowing nondeterministic transitions, avoid the double exponential gap.
On the other hand, we are also interested in finding some very restricted forms of $1$-limited automata for which a double
exponential size gap in the conversion to one-way deterministic automata remains necessary in the worst case.

A first attempt could be requiring deterministic rewritings,
according to the current configuration of the machine,
every time cells are visited for the first time,
still keeping nondeterministic the choice of the next 
state and head movement.
Another attempt could be to allow nondeterministic choices
for the symbol to rewrite,
but not for the next state and the head movement. 
In both cases the double exponential gap to one-way deterministic finite automata remains possible.
Indeed, in both cases,
different computation paths can replace the same input prefix on the tape with different strings,
as in the original model.
Actually, we noticed that the double exponential gap can be achieved already for $1$-limited automata that,
in each computation,
have the possibility to mark just one tape cell leaving the rest of the tape unchanged.
This inspired us to investigate machines with such a restriction,
which we call \emph{once-marking $1$-limited automata}.
We show that
the double exponential size gap to one-way deterministic finite automata remains possible
even for once-marking $1$-limited automata that are \emph{sweeping}
(namely,
change the head direction only at the left or right end of the tape)
and that are allowed to use nondeterminism only in the first visit to tape cells.
Comparing the size of once-marking $1$-limited automata with other kinds of finite automata, 
we prove an exponential gap to two-way nondeterministic automata.
The situation changes significantly when nondeterministic transitions are not possible.
Indeed,
we prove that
every deterministic once-marking $1$-limited automaton can be converted into an equivalent two-way deterministic finite automaton
with only a polynomial size increasing.
The costs we obtain concerning once-marking $1$-limited automata are summarized in \cref{fig:diagramOM}.

As mentioned above,
the double exponential gap from $1$-limited automata to one-way deterministic finite automata is related to the fact that
different computation paths can replace the same input prefix on the tape with different strings.
This suggested the idea of considering a different restriction,
which prevents this possibility,
by requiring the replacement of each input symbol~$a$ with a symbol that depends only on~$a$. 
To this aim, here we introduce \emph{always-marking $1$-limited automata}, that in the
first visit replace each symbol with a marked version of it. We show that in this case the gap from these devices,
in the nondeterministic version, to one-way deterministic finite automata reduces to a single exponential.
The same gap holds when converting always-marking $1$-limited automata into one-way nondeterministic finite automata,
but even when converting \emph{deterministic} always-marking $1$-limited automata into \emph{two-way nondeterministic} finite automata.
The bounds we obtain concerning always-marking $1$-limited automata are summarized in \cref{fig:diagramAM}.

\medskip

The paper is organized as follows.
After presenting in \cref{sec:prel} the preliminary notions used in the paper and,
in particular,
the definition of $1$-limited automata with the fundamental results on their descriptional complexity,
in \cref{sec:variant} we introduce once-marking and always-marking $1$-limited automata,
together with some witness languages that will be useful to obtain our results.
\cref{sec:OM,sec:AM} are devoted to the investigation of the descriptional complexity of these models.
We conclude the paper presenting some final remarks and possible lines for future investigations.

\section{Preliminaries}
\label{sec:prel}
In this section we recall some basic definitions useful in the paper.
Given a set~$S$,
$\#{S}$~denotes its cardinality and~$2^S$ the family of all its subsets.
Given an alphabet~$\Sigma$, a string~$w\in\Sigma^*$, and a symbol~$a\in\Sigma$,
$|w|$ denotes the length of~$w$,
$\Sigma^k$ the set of all strings on~$\Sigma$ of length~$k$,
$\marked a$ the \emph{marked versions} of~$a$,
and~$\marked\Sigma = \set{\marked a \mid a \in \Sigma}$ the set of the marked versions of the symbols in~$\Sigma$.

We assume the reader familiar with notions from formal languages and automata
theory, in particular with the fundamental variants of finite automata
(\owdfas, \ownfas, \twdfas, \twnfas, for short, where \ow/\tw mean
\emph{one-way}/\emph{two-way} and \textsc{d}/\textsc{n} mean
\emph{deterministic}/\emph{nondeterministic}, respectively).
For any unfamiliar terminology see, e.g.,~\cite{HU79}.

A \emph{$1$-limited automaton} (\la, for short)
is a tuple $\mA=(Q,\Sigma,\Gamma,\delta,q_I,F)$,
where~$Q$ is a finite \emph{set of states},
$\Sigma$ is a finite \emph{input alphabet},
$\Gamma$ is a finite \emph{work alphabet} such that~$\Sigma \cup
\{\lend,\rend\} \subseteq \Gamma$,
$\lend,\rend \notin \Sigma$
are two special symbols,
called the \emph{left} and the \emph{right end-markers},
and~$\delta:Q\times\Gamma\rightarrow 2^{Q\times(\Gamma\setminus\{\lend,\rend\})\times\{-1,+1\}}$ is the \emph{transition function}.
At the beginning of the computation,
the input word~$w\in\Sigma^*$ is stored onto the tape surrounded by the two end-markers,
the left end-marker being in position zero
and
the right end-marker being in position~$|w|+1$.
The head of the automaton is on cell~$1$ and the state of the finite control is the \emph{initial state}~$q_I$.

In one move,
according to~$\delta$ and the current state,
\mA\ reads a symbol from the tape,
changes its state,
replaces the symbol just read from the tape by a new symbol,
and moves its head to one position forward or backward.
Furthermore, the head cannot pass the end-markers,
except at the end of computation,
to accept the input, as explained below.
Replacing symbols is allowed to modify the content of each cell only
during the first visit,
with the exception of the cells containing the end-markers,
which are never modified.
Hence, after the first visit, a tape cell is ``frozen''.%
\footnote{%
	More technical details can be found in~\cite{PP14}.
	However,
	a syntactical restriction forcing \las to replace in the first visit to each tape cell the input symbol in it with another symbol from an alphabet~$\Gamma_1$ disjoint
	from~$\Sigma$,
	was given.
	Here we drop this restriction,
	in order to be able to see once-marking \las as a restriction of \las.
	It is always possible to transform a \la into an equivalent \la satisfying such a syntactical restriction,
	just extending~$\Gamma$ with a marked copy of~$\Sigma$ and suitably modifying the transition function.
}

The automaton \mA accepts an input~$w$ if and only if there is a computation path that
starts from the initial state~$q_I$ with the input tape containing~$w$
surrounded by the two end-markers and the head on the first input cell,
and that ends in a \emph{final state}~$q\in F$ after passing the right end-marker.
The device \mA~is said to be \emph{deterministic} (\dla, for short) whenever~$\#{\delta(q,\sigma)}\le 1$, for any $q\in Q$ and $\sigma\in\Gamma$.

Two-way finite automata are limited automata in which no rewritings are possible.
On the other hand,
one-way finite automata can scan the input in a one-way fashion only.
A finite automaton is, as usual, a tuple~$(Q,\Sigma,\delta,q_I,F)$,
where,
analogously to \las,
~$Q$ is the finite set of states,
$\Sigma$ is the finite input alphabet,
$\delta$ is the transition function,
$q_I$ is the initial state,
and~$F$ is the set of final states.
We point out that for two-way finite automata we assume the same accepting conditions as for \las. 

Two-way machines 
in which the direction of the head can change only at the end-markers
are said to be \emph{sweeping}~\cite{Sip80b}.

\medskip

In this paper we are interested to compare the size of machines.
The \emph{size} of a model
is given by the total number of symbols used to write down its description.
Therefore,
the size of \las is bounded by a polynomial
in the number of states and of work symbols,
while,
in the case of finite automata,
since no writings are allowed,
the size is linear in the number of instructions and states,
which is bounded by a polynomial in the number of states
and in the number of input symbols.

The size costs of the simulations from \las to finite automata have been studied in~\cite{PP14} and are 
summarized in \cref{fig:diagram}.
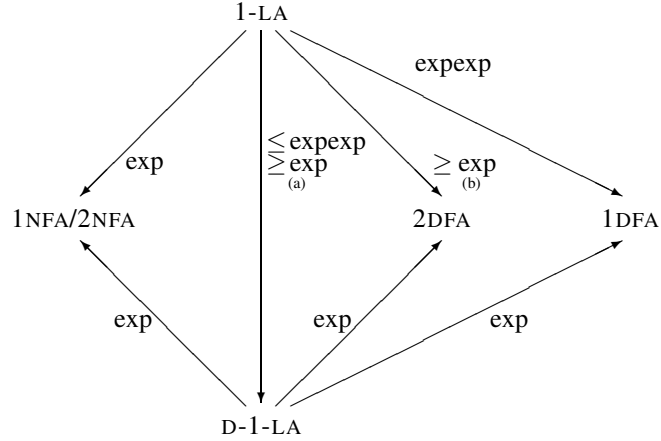
\begin{figure}[t]
	\setlength{\unitlength}{0.5cm}
\centering
	\begin{picture}(17,12)(0,0)
	
	\put(6.5,0.5){\makebox(0,0){\small\sc \dla}}
	\put(1.5,6){\makebox(0,0){\small\ownfa/\twnfa}}
	\put(11.3,6){\makebox(0,0){\small\twdfa}}
	\put(16.3,6){\makebox(0,0){\small\owdfa}}
	\put(6.5,11.5){\makebox(0,0){\small\la}}

	\put(6,11){\vector(-1,-1){4.4}} 
	\put(3.3,7.4){\makebox(0,0){\small$\exp$}}
	\put(6.4,11){\vector(0,-1){9.9}} 
	\put(7.85,8){\makebox(0,0){\small$\leq\!\text{expexp}$}}
	\put(7.35,7.4){\makebox(0,0){\small$\geq\!\exp$}}
	\put(7.35,6.9){\makebox(0,0){\tiny(a)}}

	\put(6.8,11){\vector(1,-1){4.4}} 
	\put(11.8,7.4){\makebox(0,0){\small$\geq\exp$}}
	\put(12,6.9){\makebox(0,0){\tiny(b)}}

	\put(7.2,11){\vector(2,-1){8.8}} 
	\put(11.5,10){\makebox(0,0){\small$\text{expexp}$}}
	\put(6,1){\vector(-1,1){4.4}} 
	\put(3,3.2){\makebox(0,0){\small$\exp$}}
	\put(6.8,1){\vector(1,1){4.4}} 
	\put(8.3,3.2){\makebox(0,0){\small$\exp$}}
	\put(7.2,1){\vector(2,1){8.8}} 
	\put(13,3.2){\makebox(0,0){\small$\exp$}}

	\end{picture}
\caption{Size costs of conversions of \las and \dlas into equivalent one-way and two-way
deterministic and nondeterministic finite automata.
For all the costs upper and matching lower bounds have been proved, with the only exception of (a) and (b), for
which the best known lower and upper bounds are, respectively, exponential and double exponential.}
\label{fig:diagram}
\end{figure}

\section{Witness Languages and Variants of $1$-Limited Automata}
\label{sec:variant}

As mentioned in the introduction, \las can be very succinct. In fact, for some languages the size gap to \owdfa is double exponential.
We already observed that this gap is related to nondeterminism. Indeed, if nondeterministic choices
are not possible, the gap reduces to a single exponential (see \cref{fig:diagram}).
However,
we want to understand better on the one hand how much we can restrict the model,
still keeping this double exponential gap and,
on the other hand,
if there is a restriction that,
still allowing some kind of nondeterminism,
reduces the gap to a single exponential.

In our investigations,
the following language,
which is defined with respect to an integer parameter~$n>0$,
will be useful:
\[
	K_n=\{x_1\cdots x_k\cdot x\mid k>0,\; x_1,\ldots,x_k,x\in\{a,b\}^n,\; \exists j\in\{1,\ldots,k\},\; x_j=x\}\,.
\]
We point out that each string in the language is a list of blocks of length~$n$. We ask the membership of the last block to the list of previous ones.

\begin{theorem}
\label{th:Kn}
	The language~$K_n$ is accepted by a \la with~$\bigoof{n}$ states that,
	in each accepting computation,
	replaces the content only of one cell.
\end{theorem}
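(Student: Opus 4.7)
The plan is to construct a nondeterministic $1$-limited automaton $\mA_n$ accepting $K_n$ with $O(n)$ states and marking exactly one tape cell in any accepting computation. I will organize $\mA_n$ into three phases and reuse a single $n$-valued counter across them.

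In Phase 1, $\mA_n$ sweeps the input left to right maintaining a mod-$n$ counter $c$ tracking the position within the current block. Whenever $c=1$ (the beginning of a block), $\mA_n$ nondeterministically either continues or marks the current cell as the guessed start of some $x_j$, replacing its symbol with the marked version; this is the only cell ever modified. Phase 2 continues the scan to the right end-marker, keeping $c$ mod $n$ together with a single bit recording whether at least one more complete block has been read after the mark, so as to verify both that the total input length is a positive multiple of $n$ and that the marked block is not the final block $x$.

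The delicate part is Phase 3, which must verify that the $n$ symbols starting at the marked cell equal the $n$ symbols of the last block using only $O(n)$ states. The naive implementation needs two independent counters (one for the round index $i$ and one for intra-round movement), yielding $O(n^2)$. The trick I will use is to maintain, throughout Phase 3, a single offset $\delta \in \{0,\dots,n-1\}$ equal to $(\text{target column}) - (\text{head column}) \pmod n$. Each head move changes $\delta$ by $\mp 1 \pmod n$, and $\delta=0$ exactly when the head is on the target column; since $L\equiv 0\pmod n$ and the marked cell has column~$1$, this accumulated offset is globally consistent. A round for column~$i$ proceeds as follows: at $x_j[i]$ (so $\delta=0$) read and remember the symbol; move right to the end-marker (decrementing $\delta$); move left until $\delta=0$, arriving at $x[i]$; compare. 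On a match, move one cell right: if the end-marker is read, accept; otherwise we are at $x[i+1]$ with $\delta$ reset to $0$ (implicitly bumping the target to $i+1$), and we move left to the marked cell (incrementing $\delta$) and then right (decrementing $\delta$) until $\delta=0$, which brings us to $x_j[i+1]$ and starts the next round.

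The state set of $\mA_n$ therefore consists of the counter ($c$ or $\delta$, with $n$ values), a constant-size phase label, a bit for the remembered symbol in Phase 3, and the post-mark block-count bit in Phase 2, totalling $O(n)$ states. The main obstacle in writing this out carefully will be verifying the arithmetic invariants on $\delta$ at the delicate transitions (first entry into Phase 3, each round boundary, and the acceptance step when $i=n$), and making sure the divisibility facts $p\equiv 1\pmod n$ and $L\equiv 0\pmod n$ guaranteed by Phases 1 and~2 really are what make the offset bookkeeping consistent across all movements.
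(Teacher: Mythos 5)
Your construction is correct and follows essentially the same route as the paper's proof: nondeterministically mark one cell at a block boundary during a single left-to-right scan (while checking divisibility of the length by $n$), then compare the guessed block with the final block position by position using a single counter modulo $n$. The only differences are cosmetic --- you mark the first cell of the chosen block rather than the last, and your offset invariant $\delta$ is just a more explicitly worked-out version of the paper's ``counter modulo $n$'' bookkeeping.
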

\begin{proof}
	A \la~\mA[M] can scan the tape from left to right, marking a nondeterministically chosen tape cell.
	In this scan, \mA[M] can also verify that the input length is a multiple of~$n$. Furthermore, the marking can be done
	in the last cell of a block of length~$n$. For this phase~$\bigoof{n}$ states are enough.
		
	Then the machine has to compare the symbols in the last block with the symbols in the chosen one,
	namely the block which ends with the marked cell.
	This can be done by moving the head back and forth from the last block to the chosen block,
	comparing the symbols in the corresponding positions in the two blocks,
	and rejecting in case of mismatch.
	Again, this can be implemented, using a counter modulo~$n$, with~$\bigoof{n}$ states.
\end{proof}

Using standard distinguishability arguments, it can be proved that to accept~$K_n$, a \owdfa requires a number of states double
exponential in~$n$ (state lower bounds for~$K_n$ are summarized in \cref{th:WitnesslowerBounds} below).

Hence, the language~$K_n$ is a witness of the double exponential gap from \las to \owdfas.
From \cref{th:Kn}, we can notice that this gap is obtained by using the capabilities of \las in a very restricted way: during each
accepting computation, only the content of one cell is modified.
This suggested us to considering the following restricted version of \las:

\begin{definition}
	\label{def:once-marking}
	A \la is said to be \emph{once marking} if
	in each computation
	there is a unique tape cell whose input symbol~$\sigma$ is replaced with its marked version~$\marked{\sigma}$,
	while all the remaining cells are never changed.
\end{definition}

In the following, for brevity, we indicate once-marking \las and once-marking \dlas as \omlas and \domlas, respectively.

We shall consider another restriction,
in which the \la marks,
in the first visit,
every cell reached by the head.

\begin{definition}
\label{def:always-marking}
	A \la is said to be \emph{always marking} if,
	each time the head visits a tape cell for the first time, 
	it replaces the input symbol~$\sigma$ in it with its marked version~$\marked{\sigma}$.
\end{definition}

In the following, for brevity, we indicate always-marking \las and always-marking \dlas as \amlas and \damlas, respectively.

We point out that \omlas and \amlas use the work alphabet~$\Gamma=\Sigma \cup \marked\Sigma\cup\{\lend,\rend\}$.
Hence, the relevant parameter for evaluating the size of these devices is their number of states,
differently than \las, in which the size of the work alphabet is not fixed.

\medskip

We present another language that will be used in the paper.
As~$K_n$,
it is defined with respect to a fixed integer~$n>0$:
\[
	J_n=\{x\cdot x_1\cdots x_k\mid k>0,\; x_1,\ldots,x_k,x\in\{a,b\}^n,\; \exists j\in\{1,\ldots,k\},\; x_j=x\}\,.
\]
Even in this case, a string is a list of blocks of length~$n$.
Here we ask the membership of the first block to the subsequent list.
Notice that~$J_n$ is the reversal of~$K_n$.

We have the following lower bounds:
\begin{theorem}
\label{th:WitnesslowerBounds}
	Let~$n>0$ be an integer.
	\begin{itemize}
	\item To accept~$J_n$, \owdfas and \ownfas need at least~$2^n$ states, while \twnfas need at least~$2^{\frac{n-1}{2}}$ states.
	\item To accept~$K_n$, \owdfas need~$2^{2^n}$ states, \ownfas need at least~$2^n$ states, and \twnfas need at least~$2^{\frac{n-1}{2}}$ states.
	\end{itemize}
\end{theorem}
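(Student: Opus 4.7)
The bounds split naturally into two groups. The one-way bounds for $J_n$ and $K_n$ will follow from standard distinguishability arguments, while the \twnfa bound requires counting the interaction behaviors that a prefix may exhibit across a fixed tape cut.

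For the \ownfa (and hence \owdfa) bounds on both languages, I would exhibit the fooling set $F = \{(x,x) \mid x \in \{a,b\}^n\}$ of size $2^n$: for every $x$, the concatenation $x\cdot x$ belongs to both $J_n$ and $K_n$ (take $k=1$ with the unique block equal to $x$), while for distinct $x,y \in \{a,b\}^n$ neither $x\cdot y$ nor $y\cdot x$ lies in either language, since in a word of two blocks the only candidate matching the designated block is the other one. The standard fooling-set lemma then forces at least $2^n$ states in any \ownfa. For the sharper $2^{2^n}$ \owdfa bound on $K_n$, I would switch to a Myhill--Nerode argument: to each $S \subseteq \{a,b\}^n$ associate the prefix $u_S$ obtained by concatenating the strings in $S$ in lexicographic order. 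For $S \neq S'$, picking any $x \in S \setminus S'$ gives $u_S \cdot x \in K_n$ but $u_{S'}\cdot x \notin K_n$, so the $2^{2^n}$ prefixes $u_S$ are pairwise Myhill--Nerode inequivalent.

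The main technical step is the \twnfa lower bound for $K_n$, for which I plan to use the classical interaction-behavior counting. Given a \twnfa \mA[M] with state set $Q$ of size $s$ running on an input $u\cdot v$, the only information about $u$ relevant to acceptance is a finite summary of \mA[M]'s interaction with the right boundary of $u$: a set $I_u \subseteq Q$ collecting the states in which the head can first emerge at that boundary when the computation starts at the left end-marker, together with a map $E_u \colon Q \to 2^Q$ assigning to each re-entry state $q$ the set of states in which the head can next re-emerge at the same boundary. Acceptance without a final right-emergence is impossible, since \mA[M] accepts only by crossing the right end-marker, so two prefixes sharing the same summary are indistinguishable by \mA[M]. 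The total number of possible summaries is at most $2^s\cdot 2^{s^2} \leq 2^{2s^2}$, so applying this to the $2^{2^n}$ pairwise inequivalent prefixes $u_S$ of the previous paragraph forces $2s^2 \geq 2^n$, whence $s \geq 2^{(n-1)/2}$. For $J_n$, since $J_n = K_n^R$ and \twnfas are closed under input reversal with only a constant additive state overhead (reverse the transitions, swap the roles of the end-markers, and add a constant number of gadget states to launch the simulation from a guessed final state and to finally cross the right end-marker), the same bound transfers immediately.

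The expected main obstacle is formalising the behavior summary and the corresponding indistinguishability claim carefully enough to make the bound $2^{O(s^2)}$ on the number of distinguishable prefixes rigorous; the fooling-set and Myhill--Nerode steps, together with the reversal argument transporting the \twnfa bound from $K_n$ to $J_n$, are routine.
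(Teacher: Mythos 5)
Your proposal is correct and follows essentially the same route as the paper's sketch: fooling sets and Myhill--Nerode classes for the one-way bounds, the Shepherdson-style \twnfa-to-\owdfa conversion (which you inline as behavior-summary counting with $2^{s+s^2}$ tables rather than citing it) combined with the $2^{2^n}$ \owdfa bound on $K_n$ to get $s\geq 2^{(n-1)/2}$, and transfer to $J_n$ by reversal closure of \twnfas at constant extra cost.
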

\begin{proof} (sketch)
	The lower bounds for one-way machines can be proved using standard distinguishability arguments and the fooling set technique~\cite{Bir92} (see~\cite{PP14,PPS22} for similar proofs with slightly different languages).
	
	Using a standard conversion, from a~$k$-state \twnfa accepting~$K_n$ we can obtain an equivalent \owdfa with no more than~$2^{k+k^2}$ states~\cite{RS59,She59}.
	Since every \owdfa accepting~$K_n$ should have at least~$2^{2^n}$ states, we get that~$k+k^2\geq 2^n$.
	Hence~$k$ grows as an exponential in~$n$. In particular, it can be verified that~$k>2^\frac{n-1}{2}$.
	Since
	from each \twnfa accepting a language
	we can easily obtain a \twnfa with a constant amount of extra states
	accepting the reversal of such a language,
	we can conclude that the number of states of each \twnfa accepting~$J_n$ or~$K_n$ 
	must be at least exponential in~$n$.	
\end{proof}

\section{Once-Marking 1-Limited Automata}
\label{sec:OM}

During each computation,
\emph{once-marking 1-limited automata} are able to mark just one input cell.

From \cref{th:Kn},
we already know that the language~$K_n$ can be accepted by a \omla with~$\bigoof{n}$ states.
We now show that such a machine can be turned in a even more restricted form:

\begin{theorem}
\label{th:KnOnceMarking}
	The language~$K_n$ is accepted by a \omla with~$\bigoof{n}$ states
	that is sweeping and uses nondeterministic transitions only in the first traversal of the tape.
\end{theorem}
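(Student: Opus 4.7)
The plan is to design a \omla that operates in two stages, isolating all nondeterminism to the first rightward sweep and only reversing direction at the end-markers. In the first stage the machine runs a counter modulo $n$ while moving from \lend to \rend, checking that the input length is a positive multiple of $n$. At each cell where the counter signals the end of a length-$n$ block, the transition nondeterministically either leaves the cell unchanged or rewrites its symbol $\sigma$ as $\marked\sigma$; a ``mark placed'' bit in the state forces the computation to mark exactly one cell before reaching \rend, rejecting any other path. This stage fits into \bigoof{n} states.

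In the second stage the machine has to verify, deterministically and by sweeping only, that the marked block $x_j$ coincides with the last block $x$. I plan to handle one position at a time: for each $i\in\{1,\ldots,n\}$, perform a leftward sweep followed by a rightward sweep. Starting from \rend, after exactly $n-i+1$ leftward steps the head sits on position $i$ of $x$; its symbol is latched in a one-bit register and the sweep completes to \lend with no further action. The companion rightward sweep maintains the position within each block via a counter modulo $n$, latching the current symbol whenever that counter equals $i$; when the marked cell is met (counter at $n$ and cell carrying a marked symbol), the latched symbol is position $i$ of $x_j$, which is compared against the target recorded on the way in, rejecting on any mismatch. After $n$ successful pairs the head is at \rend and the machine may move past it to accept.

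Sweeping and the confinement of nondeterminism to the first traversal are immediate from this structure, so the real obstacle is keeping the total state count \bigoof{n} rather than \bigoof{n^2}. A naive realization would treat the pair index $i$ and the intra-sweep modular counter as independent components whose product is quadratic. The trick I plan to use is to carry a single modulo-$n$ counter across all sweeps, exploiting the fact that each sweep traverses $|w|+1$ cells with $|w|\equiv 0\pmod{n}$, so the counter value at the end-markers shifts by a fixed constant per sweep. By choosing the initial offset so that a ``latch/compare'' happens precisely when the counter hits a single distinguished residue, and by adjusting the counter by a constant amount at \lend and \rend between consecutive pairs, the single counter simultaneously encodes both the within-sweep position and the current pair index $i$, leaving only a constant number of extra mode flags and the one-bit symbol register. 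This yields the claimed \bigoof{n} state bound.
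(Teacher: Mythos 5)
Your construction matches the paper's: a single nondeterministic first sweep that marks the end of one block, followed by $n$ deterministic round-trips, each comparing one position of the last block with the corresponding position of the marked block, with the iteration index encoded in the phase of a single modulo-$n$ counter (recoverable at the end-markers because $|w|\equiv 0\pmod n$) so that the total state count stays \bigoof{n}. The only differences are bookkeeping ones --- the paper performs both symbol latches during the leftward sweep and stores the iteration index explicitly only on the return sweep, whereas you split the latches across the two sweeps and thread the counter through both --- and these do not change the argument.
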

\begin{proof}
	We discuss how to modify the $\bigoof{n}$-state \omla~\mA[M] described in the proof of \cref{th:Kn}
	in order obtain a sweeping machine that uses nondeterministic transitions only in the first sweep.
	\mA[M] makes a first scan of the input, exactly as described in the proof of \cref{th:Kn}.
	In this scan the head direction is never changed.
	When the right end-marker is reached,
	\mA[M] makes~$n$ iterations,
	which in the following description will be counted from~$0$ to~$n-1$.
	
	The purpose of the iteration~$i$,
	$i=0,\ldots,n-1$,
	is to compare the $(n-i)$th symbols of the last block and of the chosen one. 
	To this aim,
	the iteration starts with the head on the right end-marker,
	and uses a counter modulo~$n$,
	initialized to~$(i+1)\bmod n$.
	The counter is decremented while moving to the left.
	In this way,
	it contains~$0$ exactly while visiting the~$(n-i)$th cell of each input block.
	Hence,
	the automaton can easily locate the~$(n-i)$th symbols of the last block and of the chosen one and check if they are equal.
	Once the left end-marker is reached, \mA[M] can cross the tape from left to right, remembering the number~$i$ of the iteration.
	Notice that~\mA[M] does not need to keep this number while moving from right to left. Indeed the value of~$i$ can be recovered from
	the value of the counter when the left end-marker is reached. 
	
	Once the iteration~$i$ is completed, if the last check was unsuccessful then~\mA[M] can stop and reject. Otherwise it can start the next
	iteration, if~$i<n-1$, or accepts.
	
	From the discussion above,
	it can be easily verified that
	\mA[M] is sweeping,
	makes nondeterministic choices only in the first sweep,
	and has~$\bigoof{n}$ many states.
\end{proof}

We now study the size relationships between \omlas and finite automata.
First, we observe that \omlas can be simulated by \ownfas and by \owdfas at the costs of an exponential and a double exponential 
increase in the number of states, respectively. These upper bounds derive from the costs of the simulations of \las by finite
automata presented in~\cite[Thm.~2]{PP14}. By considering the language~$K_n$, we can conclude that these costs cannot be reduced:

\begin{theorem}
	\label{th:boundsOnceMaking}
	Let~\mA[M] be a~$n$-state \omlas. Then~\mA[M] can be simulated by a \ownfa and by a \twnfa with a number of states exponential in~$n$,
	and by a \owdfa with a number of states double exponential in~$n$. In the worst case these costs cannot be reduced.
\end{theorem}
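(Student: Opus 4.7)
The plan is to obtain the three upper bounds from known general simulations of \las by finite automata, and to show that the bounds are tight by invoking the witness language~$K_n$, whose lower bounds are already stated in \cref{th:WitnesslowerBounds}.

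For the upper bounds, I would observe that every \omla is, by \cref{def:once-marking}, a particular case of a \la. Hence the three simulations given in~\cite[Thm.~2]{PP14} and depicted in \cref{fig:diagram} apply unchanged. This yields a \ownfa and a \twnfa of size exponential in~$n$, and a \owdfa of size double exponential in~$n$. No extra argument is needed here; in particular, the restriction that only one cell may be marked is not exploited in the upper bounds.

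For the lower bounds, I would use~$K_n$ as the witness. By \cref{th:KnOnceMarking} there exists a \omla for~$K_n$ with $\bigoof{n}$ states (in fact a very restricted one, sweeping and nondeterministic only in the first sweep, although only the state count matters here). On the other hand, \cref{th:WitnesslowerBounds} states that every \owdfa for~$K_n$ has at least~$2^{2^n}$ states, every \ownfa has at least~$2^n$ states, and every \twnfa has at least~$2^{(n-1)/2}$ states. Thus, if every $n$-state \omla could be simulated by a \owdfa (resp.\ \ownfa, \twnfa) with sub-double-exponentially (resp.\ sub-exponentially) many states, by composing with the $\bigoof{n}$-state \omla for~$K_n$ we would obtain a finite automaton contradicting one of the bounds of \cref{th:WitnesslowerBounds}. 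This establishes that the costs in the statement cannot be reduced in the worst case.

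The main (indeed only mildly delicate) point is aligning the exponents: the \omla for~$K_n$ has~$\bigoof{n}$ states, so the state lower bounds in~$n$ translate into lower bounds in the number~$m$ of states of the \omla of order~$2^{2^{\Omega(m)}}$, $2^{\Omega(m)}$, and~$2^{\Omega(m)}$, respectively, which matches the upper bounds qualitatively. The argument for the \twnfa bound goes through the same reasoning as in the proof of \cref{th:WitnesslowerBounds}: a substantially better simulation by a \twnfa would, via the subset-style conversion of~\cite{RS59,She59}, yield a too-small \owdfa for~$K_n$. No new construction is required, and the proof is essentially a bookkeeping combination of \cref{th:KnOnceMarking}, \cref{th:WitnesslowerBounds}, and~\cite[Thm.~2]{PP14}.
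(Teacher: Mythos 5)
Your proposal is correct and follows essentially the same route as the paper: the upper bounds are inherited from the general simulation of \las in~\cite[Thm.~2]{PP14} (a \omla being a special case of a \la), and the lower bounds come from combining the $\bigoof{n}$-state \omla for~$K_n$ of \cref{th:KnOnceMarking} with the \owdfa, \ownfa, and \twnfa lower bounds of \cref{th:WitnesslowerBounds}. The only difference is that you spell out the exponent bookkeeping, which the paper leaves implicit.
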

\begin{proof}
	The upper bounds derive from the cost of the simulations of \las by \ownfas and \owdfas given in~\cite[Thm.~2]{PP14}.
	For the lower bounds we consider the language~$K_n$.
	As proved in \cref{th:KnOnceMarking}, this language can be accepted by a \omla with~$\bigoof{n}$ states.
	Furthermore, according to \cref{th:WitnesslowerBounds}, it requires a number of state exponential in~$n$ to be accepted
	by \ownfas or \twnfas, and a number of states double exponential in~$n$ to be accepted by \owdfas.
\end{proof}

From \cref{th:boundsOnceMaking},
it follows that the ability of marking only once can give already a huge descriptional power.
Furthermore, from \cref{th:KnOnceMarking}, we can observe that this power is achievable even with a sweeping
machine that does not use nondeterminism after the first sweep.
From the size costs of the simulation of \las by finite automata (see \cref{fig:diagram}), we already know
that nondeterminism is essential to obtain this huge descriptional power.
We now prove that, without nondeterminism, the descriptional power on \omlas dramatically reduces:

\begin{theorem}
\label{th:det-OM}
	For each~$n$-state \domla there exists an equivalent \twdfa with~$\bigoof{n^3}$ states.
\end{theorem}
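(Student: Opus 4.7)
The plan is to view the deterministic \domla \mA as two deterministic two-way finite automata fused at the marking step. The pre-marking automaton~$A_b$ is obtained from \mA by reinterpreting every marking transition as a ``halt-and-switch'', and the post-marking automaton~$A_a$ runs, starting from the configuration produced by the marking step, on the tape in which exactly one cell has been marked. Both have at most~$n$ states, and, because \mA is deterministic, the marking cell~$i$ is a function of the input alone (or is undefined, in which case \mA's whole computation is simply that of~$A_b$).

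First, I would let the simulating \twdfa \mA[M] carry a copy of~$A_b$ as its ``before-marking'' mode, at a cost of~$n$ states: whenever~$A_b$ halts without marking, \mA[M] halts in the same way; when~$A_b$ would fire its marking transition at some cell~$i$, \mA[M] records the post-marking state~$q_0$ and the direction~$d\in\{-1,+1\}$ and enters an ``after-marking'' mode with the head on cell~$i+d$. The after-marking simulation would use states of the form $(q_a,\tau,r)$, where $q_a\in Q$ is the current state of~$A_a$, $\tau\in\{\mathrm{L},\mathrm{I},\mathrm{R}\}$ indicates whether the head is strictly left of, on, or strictly right of cell~$i$, and $r\in Q$ is an auxiliary component used to detect boundary crossings. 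With three $Q$-valued components and a constant number of modes, \mA[M] has \bigoof{n^3} states in total, and correctness in the before-marking phase and at the marking step itself is immediate from the copy of~$A_b$.

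The bulk of the argument, and the main obstacle I anticipate, is realising the transitions $\mathrm{L}\to\mathrm{I}$ and $\mathrm{R}\to\mathrm{I}$ of the after-marking mode: \mA[M] must recognise that a leftward (resp.\ rightward) step has just brought the head onto cell~$i$, although the underlying tape cell looks like any other symbol of~$\Sigma$. My idea is to exploit the determinism of~$A_b$: the auxiliary component~$r$ should maintain, as an invariant that is updatable after each single move of \mA[M], enough local information about how a fresh run of~$A_b$ would encounter the cell currently under the head to certify whether or not that cell is the one where~$A_b$ would fire its marking transition. Arranging that this invariant survives both leftward and rightward moves, while keeping~$r$ of size~\bigoof{n}, is where the real technical work lies; once the update rule is in place, correctness of the after-marking simulation follows by a routine induction on the steps of \mA's computation, using the fact that~$A_b$ has already finished its job when \mA[M] enters the after-marking mode.
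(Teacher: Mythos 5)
There is a genuine gap, and you have in fact pointed at it yourself: the entire difficulty of the theorem is how a \twdfa, which cannot write, can decide upon arriving at a cell whether that cell is the one \mA marked, and your proposal does not supply a mechanism for this. The component~$\tau\in\{\mathrm{L},\mathrm{I},\mathrm{R}\}$ cannot simply be declared and updated move-by-move, because updating it is exactly the problem of detecting a crossing of cell~$i$; and the auxiliary component~$r$, described as ``enough local information about how a fresh run of~$A_b$ would encounter the cell currently under the head,'' is very unlikely to exist in the form you want. Whether the current cell is the marking cell is a global property of the run of~$A_b$ from the initial configuration; the natural candidate for~$r$ (the state in which~$A_b$ first visits the current cell) is not locally updatable under leftward or rightward moves by a single $Q$-valued register --- propagating first-visit states across a cell requires summarizing all excursions of~$A_b$ on one side of it, which is what transition tables do at a cost of~$2^{n^2}$ (or $n^n$ in the deterministic case) states, not~$\bigoof{n}$. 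So the ``update rule'' you defer to is not a routine detail: no local invariant of size~$\bigoof{n}$ is known to do this job.

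The paper resolves the difficulty by a non-local subroutine rather than a maintained invariant. It stores the state~$s$ and symbol~$\sigma$ involved in the marking transition; then, in after-marking mode, whenever the head lands on a cell containing~$\sigma$, it suspends the simulation and runs Sipser's backward depth-first search (in the halting form of Geffert, Mereghetti and Pighizzini) over the tree of predecessor configurations of~$(s,j)$, where~$j$ is the current cell. Determinism and the once-marking property guarantee this tree is finite, that~$(q_I,0)$ occurs in it iff~$j$ is the marked cell, and that position~$j$ can be recovered afterwards by replaying the forward computation up to the marking transition (or, in the negative case, because the DFS terminates back at its root). This costs one extra $Q$-valued component for the backward search plus one for the stored pair~$(s,\sigma)$, giving~$\bigoof{n^3}$. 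If you want to salvage your write-up, you should replace the hoped-for invariant~$r$ with this kind of suspend-and-verify subroutine; as it stands, the central step of your argument is missing.
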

\begin{proof}
Let~$\mA=(Q,\Sigma,\Gamma,\delta,q_I,F)$ be a~$n$-state \domla.
We give a construction of an equivalent \twdfa~\mA*.
Before doing that,
let us introduce,
from an high-level perspective,
how the simulating machine works.

The \twdfa~\mA* operates in different modes.

In the first part of the computation,
before~\mA marks \emph{one} cell,
\mA* is in \beforeMarking mode,
in which it simulates directly
each transition of~\mA.

When~\mA* has to simulate the transition~$\delta(s,\sigma) = (\marked\sigma, d)$ used by \mA for marking a cell,
besides changing its state and moving its head according to the transition,
\mA* switches to \afterMarking mode
and stores in its finite control
the symbol~$\sigma$ that has been marked
and the state~$s$ in which~\mA was immediately before the marking.

While in \afterMarking mode,
every time a cell is visited,
\mA* has to select which transition of \mA to simulate depending on the symbol~$a$ scanned by the input head.
There are two possibilities:
if the scanned symbol is different than the symbol~\markedSymbol that has been marked,
then the transition is simulated directly.
Otherwise,
\mA* switches to \simulation mode
(described later)
to verify whether the current cell is the one that has been marked by~\mA.
If this is the case,
then \mA* simulates the transition of~\mA on the marked symbol~$\marked\sigma$,
otherwise it simulates the transition on~$\sigma$.
In both cases \mA* keeps working in \afterMarking mode,
so selecting transitions according to the strategy described above,
until there are no more moves to simulate.
Therefore~\mA* accepts if the last simulated transition corresponds to a right transition passing the right end-marker while simulating a final state of~\mA.

\bigbreak 

We now give some details on the \simulation mode,
which is the core of the simulation.
We remind the reader that~\mA* switches to this mode when,
being in \afterMarking mode,
the input head is on a cell containing the symbol~$\sigma$,
which has been saved at the end of the \beforeMarking mode.
Let us indicate by~$j$ the current position of the head,
namely the position that has to be verified.

The \twdfa~\mA* has to verify
whether~$j$ is the cell that has been marked by~\mA.
To make this check,
\mA* can verify whether the computation path of~\mA on the given input reaches,
from the initial configuration,
a configuration with state~\markingState and the head on the currently scanned cell~$j$
(we remind the reader that \markingState and \markedSymbol have been saved in the control of~\mA*
when switching from \beforeMarking to \afterMarking mode),
whose position,
however,
cannot be saved in the control.

To be sure that the machine does not ``loses track'' of the position~$j$ while performing this search,
we use the following strategy:
\begin{itemize}
	\item
		\mA* simulates a backward computation from the state~\markingState and the current position~$j$.
	\item
		If the initial configuration of~\mA is reached,
		then the cell from which the check has started is the one where the marking transition has been executed.
	\item
		At that point,
		the position~$j$ is recovered by ``rolling back'' the backward computation.
		This is done by repeating the (forward) computation of~\mA from the initial configuration until a marking transition is used.
		In fact,
		since \mA is deterministic and once marking,
		this transition is necessarily the one that,
		from the state~\markingState,
		marked~\markedSymbol.
		In other words,
		the forward computation of \mA that is simulated here is the same simulated in \beforeMarking mode. 
\end{itemize}
As we shall explain later,
even in the case the initial configuration of~\mA is not reached
(namely the verification is not successful),
our technique allows to recover the head position~$j$ from which the backward simulation started,

It is important to observe two key points for which this approach works.
The first one is that \omlas mark only one cell during their computation.
The second observation is that the simulated machine is deterministic.
Therefore,
along every accepting computation path from the initial configuration,
it occurs only once that the symbol~\markedSymbol is scanned while \mA is in state~\markingState,
which is when~\mA makes a marking transition.

To make such a verification,
and in particular the backward search,
we use a technique originally introduced by Sipser~\cite{Sip80}.
This simulation has been then refined
by Geffert, Mereghetti, and Pighizzini,
which proved that~\twdfas can be made halting with a linear increase of the number of states~\cite{GMP07}.
In the following,
we shall refer to the latter simulation as the \emph{original simulation} and use the notation and terminology contained in~\cite{GMP07},
to which we address the interested reader for missing details.

The main difference with the original simulation
is that there the simulating machine starts from the final configuration of the simulated device,
because the goal is to verify the presence of an accepting computation path.
In our case,
the machine~\mA* starts the backward simulation from the state~$s$ and the cell containing~$\sigma$ that has to be checked. 

In the following,
a~\emph{configuration} is a pair $(q,i)$,
where~$q$ is the current state
and $i$ is the position of the tape head.

Consider the graph whose nodes represent configurations and edges computation steps.
Since \mA is deterministic,
the component of the graph containing~$(s,j)$ is a tree rooted at this configuration,
with backward paths branching to all possible predecessors of $(s,j)$.
In addition,
no backward path starting from~$(s,j)$ can loop
(hence, it is of finite length),
because the marking configuration~$(s,j)$ cannot be reached by a forward path from a loop
(due to the fact that the machine is deterministic).

The simulating machine~\mA* can perform a depth-first search of this tree in order to detect whether the initial configuration~$(q_I,0)$
belongs to the predecessors of~$(s,j)$.
If this is the case,
then the machine returns to the position~$j$,
by performing a forward simulation of~\mA from~$(q_I,0)$ until when~$s$ is entered while reading the symbol~$\sigma$.
We stress that this approach works because the simulated machine is deterministic.
After that,
the simulation of \mA in \afterMarking mode is recovered by performing a move on the symbol~$\marked\sigma$. 
On the other hand,
if the whole tree has been examined without reaching~$(q_I,0)$,
then the cell in position~$j$ is not the marked one,
so the machine simulates a move of~\mA* on~$\sigma$
from the cell in position~$j$,
again switching back to \afterMarking mode.
Notice that this case occurs when there are no more predecessors of~$(s,j)$ to visit.
So,
in this case,
the machine~\mA* completes the depth-first search on the cell in position~$j$,
while looking for further nodes of the graph reachable from the configuration~$(s,j)$.
Hence,
no extra steps are required to retrieve the position~$j$.

In conclusion,
\mA* has three state components of size~\bigoof{n}:
one used in \beforeMarking and \afterMarking for the direct simulation of the transitions of~\mA,
one for storing the state~\markingState and the symbol~\markedSymbol,
and one used in \simulation mode.
So, the total number of states of~\mA* is~\bigoof{n^3}.
\end{proof}

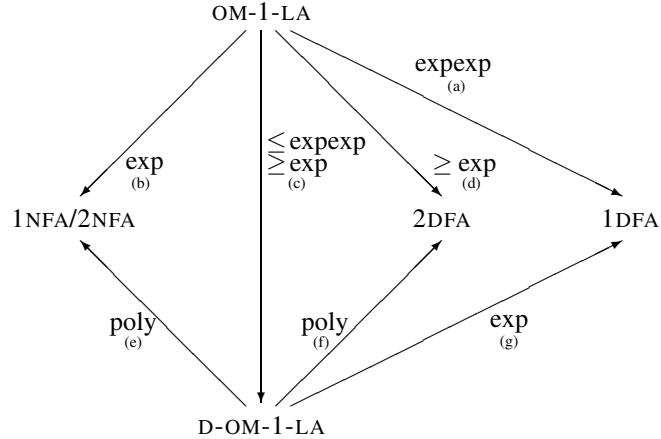
\begin{figure}[t]
	\setlength{\unitlength}{0.5cm}
\centering
	\begin{picture}(17,12)(0,0)
	
	\put(6.5,0.5){\makebox(0,0){\small\sc \domla}}
	\put(1.5,6){\makebox(0,0){\small\ownfa/\twnfa}}
	\put(11.3,6){\makebox(0,0){\small\twdfa}}
	\put(16.3,6){\makebox(0,0){\small\owdfa}}
	\put(6.5,11.5){\makebox(0,0){\small\omla}}

	\put(6,11){\vector(-1,-1){4.4}} 
	\put(3.3,7.4){\makebox(0,0){\small$\exp$}}
	\put(3.2,6.9){\makebox(0,0){\tiny (b)}}

	\put(6.4,11){\vector(0,-1){9.9}} 
	\put(7.85,8){\makebox(0,0){\small$\leq\!\text{expexp}$}}
	\put(7.35,7.4){\makebox(0,0){\small$\geq\!\exp$}}
	\put(7.35,6.9){\makebox(0,0){\tiny(c)}}

	\put(6.8,11){\vector(1,-1){4.4}} 
	\put(11.8,7.4){\makebox(0,0){\small$\geq\exp$}}
	\put(12,6.9){\makebox(0,0){\tiny(d)}}

	\put(7.2,11){\vector(2,-1){8.8}} 
	\put(11.5,10){\makebox(0,0){\small expexp}}
	\put(11.6,9.5){\makebox(0,0){\tiny(a)}}

	\put(6,1){\vector(-1,1){4.4}} 
	\put(3,3.2){\makebox(0,0){\small$\poly$}}
	\put(3,2.7){\makebox(0,0){\tiny (e)}}

	\put(6.8,1){\vector(1,1){4.4}} 
	\put(8.1,3.2){\makebox(0,0){\small$\poly$}}
	\put(8,2.7){\makebox(0,0){\tiny (f)}}

	\put(7.2,1){\vector(2,1){8.8}} 
	\put(13,3.2){\makebox(0,0){\small$\exp$}}
	\put(13,2.7){\makebox(0,0){\tiny(g)}}

	\end{picture}
\caption{Size costs of conversions involving \omlas.
	The gaps~(a) and~(b) derive from \cref{th:boundsOnceMaking}.
	For (c) and (d) the lower bound derives from the lower bound of the language~$K_n$ on \twnfas~(\cref{th:WitnesslowerBounds});
	the best known upper bound derives from~(a).
	The bounds~(e) and~(f) are from \cref{th:det-OM}.
	The upper bound for~(g) derives from the conversion from \dlas
	and the lower bound from the conversion from \twdfas.
}
\label{fig:diagramOM}
\end{figure}

In \cref{fig:diagramOM} the state costs of the conversions involving \omlas are summarized.
In particular,
we proved that the size gap from~\omlas to~\twnfas is exponential and to~\owdfas is double exponential,
while \domlas and \twdfas are polynomially related in size.

Some questions remain open, in particular about the costs of
the simulations of \omlas by \domlas and by \twdfas. At the moment, from the above mentioned results, we can derive double
exponential upper bounds and exponential lower bounds. 
The same questions are open for the simulation of \las by \dlas and by \twdfas, namely by dropping the once-marking restriction.
We point out that these questions are related to the problem of the cost of the elimination of nondeterminism from two-way finite
automata, proposed by Sakoda and Sipser in 1978~\cite{SS78}, which is still open.

\section{Always-Marking 1-Limited Automata}
\label{sec:AM}
Always-marking 1-limited automata replace,
when they visit each cell for the first time,
the input symbol with its marked version.
In this section we study the descriptional complexity of these devices.

First of all,
we prove that \amlas cannot achieve the same succinctness as~\las.
In fact,
the size gap to \owdfas reduces from double exponential for \las to single exponential.

\begin{theorem}
\label{th:upperNonDetAM}
  Each~$n$-state \amla can be simulated by a \ownfa with at most~$n\cdot 2^{n^2}$ states
  and by a complete \owdfa with at most~$(2^n-1)\cdot 2^{n^2}+1$ states.
\end{theorem}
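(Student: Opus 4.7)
The plan is to exploit the fact that the tape contents of an \amla are a deterministic function of the input: every visited cell is overwritten with the marked version of the symbol originally in it, and, since the head moves contiguously, the set of ever-visited cells is always an interval $[1,k]$ where $k$ is the maximum head position reached. After the head has first moved past position $i$, the tape therefore looks like $\lend\marked{w_1}\cdots\marked{w_i}w_{i+1}\cdots w_n\rend$, and all information about the marked prefix that is relevant for future moves can be summarized by a profile $\tau_v : Q\to 2^Q$, where $v=w_1\cdots w_i$ and $\tau_v(p)$ is the set of states $p'$ such that, starting with the head at position $i$ in state $p$ (having just moved left from $i+1$), the \amla can eventually come back to position $i+1$ in state $p'$ by a computation confined to cells $0,\ldots,i$. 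There are at most $(2^n)^n=2^{n^2}$ such profiles.

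The simulating \ownfa reads the input from left to right and stores, after processing $v=w_1\cdots w_i$, a pair $(q,\tau_v)$ whose intended meaning is that along some computation of the \amla the head could be at position $i+1$ in state $q$, about to read $w_{i+1}$ for the first time, with the marked prefix behaving as $\tau_v$. The initial state is $(q_I,\tau_\epsilon)$, with $\tau_\epsilon$ obtained directly from the \amla's transitions on $\lend$. On reading a new symbol $a$, the simulator runs the local computation at the frontier: the first-visit transition $\delta(q,a)$ turns the cell into $\marked a$ and nondeterministically picks a new (state, direction); any left move bounces back through $\tau_v$, landing at the frontier in a new state scanning $\marked a$, from which transitions in $\delta(\cdot,\marked a)$ apply; iterating to a fixed point collects all states $q'$ in which the head first exits to position $i+2$, and simultaneously yields the updated profile $\tau_{va}$ from $\tau_v$, $a$, and $\delta$. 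A state $(q,\tau_w)$ is declared accepting exactly when, starting from state $q$ at position $|w|+1$ on $\rend$ and using $\tau_w$ to handle any excursions back into the marked prefix, the \amla can cross $\rend$ into a state of $F$. Counting the states yields at most $n\cdot 2^{n^2}$.

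For the complete \owdfa, I would observe that the profile $\tau_v$ depends only on $v$ and not on the \amla's nondeterministic choices, so the second component is already deterministic and only the first component has to be determinized. A subset construction on the first component produces \owdfa states of the form $(S,\tau)$ with $S\subseteq Q$ nonempty, together with a single additional trap state absorbing all configurations in which the set of possible current states has become empty, giving $(2^n-1)\cdot 2^{n^2}+1$ states.

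The main obstacle I anticipate is making the local simulation at the frontier and the profile update fully rigorous: they intermix a nondeterministic first-visit transition on $a$, subsequent transitions on $\marked a$, and possibly many back-and-forth excursions into the previous marked prefix via $\tau_v$. Correctness reduces to a small fixed-point argument showing that $\tau_v$ really encapsulates every accessible behavior of $\lend\marked{w_1}\cdots\marked{w_i}$, so that combining $\delta(\cdot,\marked a)$ with $\tau_v$ yields exactly the states in which the head can first cross from position $i+1$ to position $i+2$, and analogously for the new profile $\tau_{va}$.
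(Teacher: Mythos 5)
Your proposal is correct and follows essentially the same route as the paper: the profile $\tau_v$ is exactly the paper's transition table for the frozen prefix (a relation on $Q$, hence $2^{n^2}$ possibilities), the \ownfa{} keeps the pair (frontier state, table) for the $n\cdot 2^{n^2}$ bound, and the key observation that always-marking makes the table a function of the input prefix alone---so that only the state component needs the subset construction, yielding $(2^n-1)\cdot 2^{n^2}+1$ with a single sink for the empty set---is precisely the paper's argument.
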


\begin{proof}
  Let $\mA[M]=(Q,\Sigma,\Gamma,\delta,q_0,F)$ be a given~$n$-state \amla.
  We adapt the argument used in~\cite{PP14} to convert \las into
  \ownfas and \owdfas,
  which is derived from the technique to convert \twdfas into equivalent~\owdfas,
  presented in~\cite{She59},
  and based on \emph{transitions tables}.
  
  Roughly,
  transition tables represent the possible behaviors of~\mA[M] on frozen tape segments.
  More precisely, given~$z\in\Gamma^*$ ,
  the \emph{transition table} associated with~$z$ is the binary
  relation~$\tau_z\subseteq Q\times Q$,
  consisting of all pairs~$(p,q)$ such that~\mA[M] has a computation path that
  starts in the state~$p$ on the rightmost symbol of the tape segment containing $\lend z$,
  ends entering the state~$q$ by leaving the same tape segment to the right side, i.e.,
  by moving from the rightmost cell of the segment to the right, and
  does not visit any cell outside the segment.
  
  First, we can apply the conversion presented in~\cite{PP14} from 
  \las to \ownfas, in order to obtain from~\mA[M] an equivalent \ownfa~$A$, whose
  computations simulate the computations of~\mA[M] by keeping in the finite state control two components:
  \begin{itemize}
  \item The transition table associated with the part of the tape at the left of the head.
  	This part has been already visited and, hence, it is frozen.
  \item The state in which the simulated computation of~\mA[M] reaches the current tape position for the first time.
  \end{itemize}
  For details we address the reader to~\cite[Thm.~2]{PP14}.
  Since the number of transition tables is at most~$2^{n^2}$, the number of states in the resulting \ownfa~$A$ is bounded by~$n\cdot 2^{n^2}$.
  
  Applying the subset construction, this automaton can be converted into an equivalent deterministic
  one, with an exponential increase of the number of states, so obtaining a double exponential 
  number of states in~$n$.
  In the general case, this increasing cannot be reduced.
  This is due to the fact that different computations of~$A$, after reading the same input, could
  keep in the control different transitions tables, depending on the fact that~\mA[M] can replace
  the same input by different strings.
  
  However, under the restriction we are considering, along different computations, each input string~$x$
  is always replaced by the same string~$\marked{x}$,  which is obtained by marking every symbol of~$x$.
  Hence, at each step of the simulation, the transition table stored by~$A$ depends only on the  
input prefix already inspected. The only part
  that can change is the state of the simulated computation of~\mA[M] after reading~$x$.
  
  This allows to obtain from~$A$ a \owdfa~$A'$, equivalent to \mA[M] that, after reading a string~$x$, has in its finite state control
  the transition table associated with~$\marked{x}$, and the \emph{set} of states that the computations of~\mA[M] can reach after
  reading~$x$. In other words, the automaton~$A'$ is obtained from~$A$ by keeping the first component of the
  control, which is deterministic, and making a subset construction for the second one.
  
  By summarizing, the possible values of the first component are~$2^{n^2}$, while the values of the second one
  are~$2^n$, namely the possible subsets of the state set of~\mA[M]. This gives a~$2^n\cdot 2^{n^2}$ upper bound.
  We can slightly reduce this number, by observing that when the second component contains the empty set,
  i.e., each computation of~\mA[M] (or equivalently of~$A$) stops before entering it, then the input is rejected, regardless
  the first component. Hence, we can replace all the pairs having the empty set as a second component with
  a unique sink state, so reducing the upper bound to~$(2^n-1)\cdot 2^{n^2}+1$
\end{proof}

The asymptotical optimality of the upper bounds in \cref{th:upperNonDetAM} derives from the optimality of the conversions
from \twnfas to \ownfas and to \twdfas~\cite{RS59,She59,Kap05}.

We now show that \amlas can be more succinct than \twnfas, even in the deterministic case. 
In particular we prove the following:

\begin{theorem}
\label{th:DAM->2DFA}
	The language~$J_n$ is accepted by a \damla with~$\bigoof{n}$ states, while it cannot be accepted 
	by any \twnfa with less than~$2^{\frac{n-1}2}$ states.
\end{theorem}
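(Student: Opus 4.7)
The lower bound is immediate from \cref{th:WitnesslowerBounds}, so the work is to build a \damla~$M$ with $\bigoof{n}$ states accepting $J_n$. The plan is to exploit two features of the always-marking discipline. First, after a first visit a cell's marked symbol still encodes the original input symbol, so the tape itself stores a re-readable copy of the first block $x$. Second, by designing $M$ so that it never moves into an unmarked cell without first marking it, the marked region always forms a contiguous prefix of the tape, and its right boundary serves as an on-tape landmark that $M$ can locate by walking right until the first unmarked cell.

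The machine operates in two phases. Phase~1 sweeps right over the first $n$ cells, marking each and using a counter $p=1,\ldots,n$ to verify that the prefix has length exactly $n$; after it, positions $1,\ldots,n$ are marked and the head rests on the first cell of $x_1$. Phase~2 processes the blocks $x_\ell$ one at a time ($\ell=1,2,\ldots$), trying to verify $x_\ell=x$ symbol by symbol. A single comparison step at intra-block position $p$ does the following: (i) record the underlying symbol $c$ at the current cell (position $\ell n+p$) and the counter value $p$ in the finite control, while marking that cell; (ii) walk left to $\lend$ and then exactly $p$ steps right, which lands the head at position $p$ of $x$; (iii) compare the underlying value of the marked symbol stored there with $c$.

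On a match with $p<n$, $M$ walks right over marked cells until it meets the first unmarked cell, which by the invariant is position $\ell n+p+1$, and proceeds with $p+1$. On a match with $p=n$, $M$ accepts. On a mismatch, $M$ abandons $x_\ell$ by walking right to the first unmarked cell and then $n-p$ further cells, marking them, thereby arriving at position $(\ell+1)n+1$, the start of $x_{\ell+1}$. Hitting $\rend$ mid-block or at the start of a fresh comparison iteration triggers rejection. The finite control stores only $p\in\{1,\ldots,n\}$, a symbol $c\in\{a,b\}$, and a constant-size mode indicator, so $M$ has $\bigoof{n}$ states. The one point the proof has to check carefully is the invariant that at every intermediate configuration the marked cells form a contiguous prefix whose right end is the rightmost cell ever visited; once that invariant is in place, the navigation primitives and the $\bigoof{n}$ state bound follow immediately, and the lower bound is supplied by \cref{th:WitnesslowerBounds}.
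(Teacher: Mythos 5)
Your construction is essentially the paper's: mark the first block, then check each subsequent block against it symbol by symbol, using the frontier between marked and unmarked cells to relocate your place and a counter bounded by $n$ to align positions; the lower bound is indeed supplied directly by \cref{th:WitnesslowerBounds}. One concrete defect: accepting immediately upon a full match at $p=n$ is not sound, because membership in $J_n$ requires the \emph{entire} input to factor into blocks of length $n$. For $n\ge 2$ your machine accepts $x\cdot x\cdot a$, whose length $2n+1$ is not a multiple of $n$ and which therefore lies outside $J_n$; your end-marker checks only catch malformed inputs when no matching block is found. The paper's machine addresses exactly this point by, after a successful match, continuing to the right end-marker while verifying that the total input length is a multiple of $n$ --- a constant-overhead fix that completes your argument without affecting the $\bigoof{n}$ state bound.
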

\begin{proof}
	The lower bound for \twnfas has been given in \cref{th:WitnesslowerBounds}.
	The possibility of marking the already-visited cells allows to reduce this cost, 
	even without making use of the nondeterminism, as we now describe.
	An always marking \dla~\mA[M] can firstly visit and mark the first~$n$ tape cells.
	Then, it starts to inspect the next block of length~$n$.
	When the head reaches for the first time a cell, \mA[M] remembers the
	scanned symbol~$\sigma$ in it and moves the head back to
	the left end-marker and then to the corresponding cell in the first block (this can be implemented
	with a counter modulo~$n$). If the symbol in this cell is not~$\sigma$ then~\mA[M] has to skip
	the remaining symbols in the block under inspection and inspect the next block, if any.
	This can be done moving the head to the left end-marker and then, starting
	to count modulo~$n$, moving to the right until finding the first symbol of the next block.
	This symbol can be located using the value of the counter and the fact that it has not been marked yet.
	Otherwise, if the symbol in the cell coincides with~$\sigma$ and the block is not completely inspected
	(see below), 
	\mA[M] moves the head to the right to
	search the next symbol of the block under inspection, namely the first unmarked symbol.
	
	When locating a symbol,
	\mA[M] can also check and remember if it is in position~$n$.
	This is useful to detect whether a block has been completely scanned,
	which also means that the block has been \emph{successfully scanned},
	otherwise the machine would have already rejected.
	Hence,
	in this case,
	\mA[M] can move the head to the right to finally reach the accepting configuration.
	However,
	according to the definition of~$J_n$,
	before doing that,
	\mA[M] needs to verify that the input has length multiple of~$n$.
	All these steps can be implemented with a fixed number of variables and a counter modulo~$n$.
	This allows to conclude that~\mA[M] can be implemented with~$\bigoof{n}$ states.
\end{proof}

In \cref{th:DAM->2DFA} we proved an exponential gap from \damlas to \twnfas
and hence also to one-way finite automata. This allows to conclude that the following upper bounds,
that are immediate consequences of the corresponding upper bounds for \dlas~\cite[Thm.~2]{PP14}, cannot be significantly reduced:

\begin{theorem}\label{th:upperDetAM}
  Each~$n$-state \damla can be simulated by a \owdfa and by a \ownfa with no more than~$n\cdot(n+1)^n$ states.
\end{theorem}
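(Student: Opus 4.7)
The plan is to observe that every \damla is a special case of a \dla with work alphabet $\Sigma\cup\marked\Sigma\cup\{\lend,\rend\}$, so the theorem follows directly from the size bound for \dlas given in~\cite[Thm.~2]{PP14}. My proposal simply points out where the factor $n\cdot(n+1)^n$ comes from, by revisiting the transition-table construction used in \cref{th:upperNonDetAM} and specialising it to the deterministic setting.

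First, following that construction, I would associate with each frozen tape prefix $z$ its transition table $\tau_z$ recording, for each state $p\in Q$, the unique state $q$ (if any) in which \mA[M] leaves $\lend z$ to the right when placed on the rightmost cell in state $p$ without ever crossing the left end-marker. Since \mA[M] is deterministic, each $\tau_z$ is a \emph{partial function} from $Q$ to $Q$, not an arbitrary binary relation, and hence the number of distinct transition tables is bounded by $(n+1)^n$ instead of $2^{n^2}$.

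Next, I would set up the simulating machine so that its finite control carries two components: the transition table of the prefix already read, and the state in which \mA[M] first reaches the current tape cell. The second component ranges over $Q$, giving $n$ possibilities and the claimed total of $n\cdot(n+1)^n$ states. Because \mA[M] is both deterministic and always-marking, both components are updated as a function of the current input symbol alone: the frozen content is pinned down symbol by symbol by the marking rule, and determinism forces a unique next entering state. Hence the simulator is directly a \owdfa and, trivially, also a \ownfa of the same size.

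I expect no real obstacle: the only technical point is the verification that $\tau_{za}$ and the new entering state can be computed from $\tau_z$, the current entering state, and the newly read symbol $a$. This is precisely the table-composition argument of~\cite{She59} already invoked in the proof of \cref{th:upperNonDetAM}, and it carries over unchanged here, with the mild simplification that the tables involved are partial functions rather than arbitrary relations.
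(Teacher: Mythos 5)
Your proposal is correct and follows exactly the route the paper takes: the paper states this bound as an immediate consequence of the \dla-to-one-way-automata simulation of~\cite[Thm.~2]{PP14}, since every \damla is in particular a \dla. Your additional unpacking of the $n\cdot(n+1)^n$ count (deterministic transition tables as partial functions from $Q$ to $Q$, plus the entering state) is a faithful account of where that bound comes from, though note that determinism alone already pins down the frozen tape content, so the always-marking property is not actually needed for this direction.
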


From the  discussion above and \cref{th:upperDetAM},
we have the same state gap from \damlas and from \dlas to one-way automata. 

\begin{figure}[t]
	\setlength{\unitlength}{0.5cm}
\centering
	\begin{picture}(17,12)(0,0)
	
	\put(6.5,0.5){\makebox(0,0){\small\sc \damla}}
	\put(1.5,6){\makebox(0,0){\small\ownfa/\twnfa}}
	\put(11.3,6){\makebox(0,0){\small\twdfa}}
	\put(16.3,6){\makebox(0,0){\small\owdfa}}
	\put(6.5,11.5){\makebox(0,0){\small\amla}}

	\put(6,11){\vector(-1,-1){4.4}} 
	\put(3.3,7.4){\makebox(0,0){\small$\exp$}}
	\put(6.4,11){\vector(0,-1){9.9}} 
	\put(7.5,7.4){\makebox(0,0){\small$\leq\exp$}}
	\put(7.7,6.9){\makebox(0,0){\tiny(a)}}

	\put(6.8,11){\vector(1,-1){4.4}} 
	\put(11.3,7.4){\makebox(0,0){\small$\exp$}}
	\put(7.2,11){\vector(2,-1){8.8}} 
	\put(11.5,10){\makebox(0,0){\small $\exp$}}
	\put(6,1){\vector(-1,1){4.4}} 
	\put(3,3.2){\makebox(0,0){\small$\exp$}}
	\put(6.8,1){\vector(1,1){4.4}} 
	\put(8.3,3.2){\makebox(0,0){\small$\exp$}}
	\put(7.2,1){\vector(2,1){8.8}} 
	\put(13,3.2){\makebox(0,0){\small$\exp$}}

	\end{picture}
\caption{State costs of conversions involving \amlas.
All the exponential upper bounds derive from \cref{th:upperNonDetAM,th:upperDetAM},
while the lower bounds derive from \cref{th:DAM->2DFA}.
For~(a) we do not know if in the worst case an exponential size is also necessary.
}
\label{fig:diagramAM}
\end{figure}
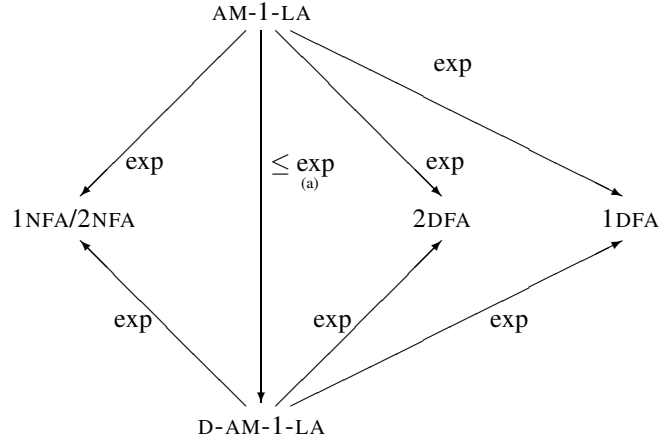

\medskip

The state costs of the conversions involving \amlas are summarized in \cref{fig:diagramAM}.

Even in the case of~\amlas,
as well as in the cases of \las and of \omlas,
we do not know how much the elimination of the nondeterminism costs.
Here,
we have an exponential upper bound for the conversion of \amlas into \damlas but,
at the moment,
we do not have a matching lower bound.
Considering the conversion of \amlas into \twdfas, unlikely the analogous conversions from \las and \omlas,
here we have matching exponential upper and lower bounds.
As already mentioned at the end of \cref{sec:OM}, these questions are related to the open
question of Sakoda and Sipser.

\section{Conclusion}
We study the costs of the simulations of \omlas and
\amlas by finite automata. \cref{fig:diagramOM,fig:diagramAM} give a summary of the
results we obtained. They can be compared with the costs of the simulations concerning \las,
in \cref{fig:diagram}.

We observed that \amlas cannot reach the same succinctness as \las and \omlas\linebreak
(see \cref{th:boundsOnceMaking,th:upperNonDetAM}). 
In particular, in \cref{th:KnOnceMarking} we have shown that the language~$K_n$
can be accepted by a \omla with~$\bigoof{n}$ states. Hence, it requires an exponential
number of states on \amlas due to the fact that a double exponential
number of states on \owdfas is necessary (see \cref{th:WitnesslowerBounds}).
It is not difficult to describe a \twnfa accepting~$K_n$ with an exponential number of states.
We point out that such a machine is also a~\amla.
Hence,
by summarizing,
the language~$K_n$ is accepted by a \omla with~$\bigoof{n}$ states,
by an~\amla with a number of states exponential in~$n$,
and by a \owdfa with a number of states double exponential in~$n$.
All these costs cannot be reduced.

Since in the nondeterministic case the gaps from \omlas to finite automata are the same as from \las,
a natural question is to ask if \omlas are always as succinct as \las.
Intuitively the answer to this question is negative.
For instance we do not see how to recognize the language whose strings are concatenations of blocks of length~$n$,
in which two blocks are equal,
with a~\omla with~$\bigoof{n}$ states,
while it is not hard to accept it using a \la with such a number of states. 
We leave the study of this question for a future work.

Another candidate for studying this question is the unary language~$(a^{2^n})^*$.
We proved that this language can be accepted by a \dla with~$\bigoof{n}$ states and a work
alphabet of cardinality~$\bigoof{n}$, and by a \dla with ~$\bigoof{n^3}$ states and 
work alphabet of size not dependent on~$n$~\cite{PP19,PP23}. As pointed out in~\cite{PP19},
each \twnfa accepting it requires at least~$2^n$ states. Hence, by \cref{th:det-OM}
even each \domla accepting it requires an exponential number of states.
We do not see how to reduce this number even by allowing the use of nondeterminism on 
\omlas or on \amlas.

More in general, the comparisons between the sizes of these restricted versions of \las deserve further
investigation, even in the unary case where the cost of several simulations are still unknown~\cite{PP19}. 
In a recent paper,
we investigated \emph{forgetting \las},
another restriction of \las in which there is a unique symbol~$X$ that is used to replace input symbols.
Therefore,
during the first visit to a cell,
its original content is always replaced by~$X$~\cite{PP23b}.

Finally, we would like to mention once again the problem of the cost of removing nondeterminism from \las,
\omlas, and \amlas (see \cref{sec:OM,sec:AM}),
which is connected to the main question of the 
cost of the elimination of
nondeterminism from two-way finite automata,
raised longtime ago by Sakoda and Sipser and still open~\cite{SS78}
(for a survey, see~\cite{Pi13}).

\bibliographystyle{eptcs}
\bibliography{biblio}

\end{document}